\documentclass[10pt, conference]{ieeeconf}
\IEEEoverridecommandlockouts
\usepackage{cite}
\usepackage{amsmath,amssymb,amsfonts}
\usepackage{algorithmic}
\usepackage{graphicx}
\usepackage{blindtext}
\usepackage{hyperref}
\usepackage[english]{babel}

\usepackage{amsthm}
\usepackage{todonotes}
\setuptodonotes{inline}
\usepackage{xcolor}

\usepackage{background}
\SetBgContents{Copyright (c) 2026 IEEE}
\SetBgScale{1}
\SetBgAngle{0}
\SetBgPosition{current page.north east}
\SetBgHshift{-2.5cm}
\SetBgVshift{-1cm}

\newtheorem{theorem}{Theorem}[section]
\newtheorem{corollary}{Corollary}[theorem]

\newtheorem{definition}[theorem]{Definition}
\newtheorem{proposition}[theorem]{Proposition}

\usepackage{url}

\input{macros.sty}

\newtheorem{thm}{Theorem}[section]

\newtheorem{defn}{Definition}[section]

\newtheorem{rem}{Remark}

\newtheorem{assumption}{Assumption}

\usepackage{textcomp}
\usepackage{xcolor}
\def\BibTeX{{\rm B\kern-.05em{\sc i\kern-.025em b}\kern-.08em
    T\kern-.1667em\lower.7ex\hbox{E}\kern-.125emX}}

\begin{document}

\title{\LARGE \bf A Time-to-Collision Barrier Function Approach to Collision Avoidance for Stochastic Systems

\author{Benedikt Barthel Sorensen$^1$ \and Mitchell Black$^2$ \and Erfaun Noorani$^2$ \and Themistoklis P. Sapsis$^1$}
\thanks{DISTRIBUTION STATEMENT A. Approved for public release. Distribution is unlimited.}
\thanks{Accepted for presentation at the 65th IEEE Conference on Decision and Control (CDC 2026), Honolulu, Hawaii, USA.}
\thanks{$^1$Department of Mechanical Engineering, Massachusetts Institute of Technology, Cambridge, MA, USA; \texttt{$\{$bbarthel,sapsis$\}$@mit.edu}.}
\thanks{$^2$MIT Lincoln Laboratory, Lexington, MA, USA; \texttt{first.last@ll.mit.edu}.}
}

\maketitle

% \listoftodos

\newpage

\begin{abstract}
Collision avoidance constraints for autonomous systems are typically formulated in position or velocity space, implicitly reacting to geometric proximity. We propose an alternative paradigm based on the adversarial time-to-collision (aTTC): the minimum time in which an adversary could achieve a collision given its dynamical constraints. By defining a control barrier function (CBF) directly in the time domain, the resulting controller is inherently anticipatory. The evading agent responds not only to whether a pursuer is on a collision course, but to how quickly it could reach one. This formulation enables velocity modulation that exploits the pursuers dynamic limits as an evasive strategy, a behavior not captured by standard distance-based CBFs. Since exact aTTC computation requires integrating the full system dynamics, we employ a lightweight neural network surrogate that admits a real-time quadratic program-based control law. We validate the approach in a 2D comparative study and a 3D multi-agent pursuit-evasion scenario, where the aTTC-based CBF outperforms a higher-order distance-based baseline by more effectively buying time against superior pursuers with a significant speed advantage.
\end{abstract}
\section{Introduction} 
%% Paragraph 1 - Multi-Agent Collision Avodiance
Collision avoidance in multi-agent autonomous systems is a fundamental problem with broad relevance, such as robotic navigation in complex environments to autonomous vehicles operating in dense urban traffic. It requires tracking other agents' positions, velocities, and behaviors while minimizing collision risk. This is challenging due to dynamic, uncertain interactions, demanding efficient, real-time risk assessment that can be integrated into control policies.

% \subsection{Related Work.}
One approach is the control barrier function (CBF), which enforces safety through a forward invariance condition that can be encoded as a linear constraint on the control solution to a quadratic program (QP). The domain of safety is defined with respect to the barrier function. Forward invariance, and thus safety, is then obtained by placing a constraint on the rate-of-change of the barrier function, ensuring that no barrier crossings will occur~\cite{ames_control_2017,ames_control_2019,wang2017safety, zhang2025gcbf+}. Another approach is the velocity obstacle (VO), which defines for a given agent and a moving obstacle the set of velocity vectors which would give rise to a collision if both maintained constant velocity for some time~\cite{fiorini_motion_1998}. These ideas have recently been combined in so called Velocity Obstacle CBFs (VO-CBFs) which construct the barrier function in terms of the VO geometry rather than standard distance-based formulation \cite{huang2025dynamic,roncero2025multi}. 
% Working directly in velocity space streamlines the CBF formulation by allowing a first order definition while embedding the VO in the CBF-QP framework allows for less conservative interventions. 
Other CBF alternatives have been proposed including the Future-Focused CBF~\cite{black2023future}, which guarantees safety over arbitrarily long time horizons, and the Collision Cone CBF (C3BF), which is based on line-of-sight to the obstacle \cite{tayal2026collision}. 

%% Paragraph 2 - Adverserial Collision Avoidance
In many situations the obstacles encountered by an agent are not simply neutral moving objects, but other autonomous agents. If these agents are \textit{cooperative}, the burden of avoiding collision between any pairs of agents can be split such that each agent only executes a fraction of the necessary evasive maneuver \cite{van_den_berg_reciprocal_2008}. Such a control scheme can be decentralized and parallelized as long as each agent \textit{assumes} that all others obey the same control logic \cite{van_den_berg_reciprocal_2011}. \textit{Adversarial} collision avoidance is the opposite limiting case where the agent(s) must assume that any other agent is \textit{actively seeking} collision. This problem has been addressed from game-\cite{exarchos_uav_2016} and control-theoretic perspectives such as CBFs \cite{lv_control_2024}. However, these studies were restricted to 2D motion with fixed speed and considered only very short duration missions and thus did not provide statistical analysis of the long term collision avoidance capability. 

%% Paragraph 3 - TTC and Our Method
In this work, we propose a temporal CBF formulation where the barrier function is neither defined in Euclidean space nor velocity, but directly in time. To this end, we build on the concept of time-to-collision (TTC), a standard metric in automotive safety \cite{hayward1972near,barhoumi_formal_2026}. At any instant the TTC is the time to collision between two agents were they to continue on their current trajectory at their current speed. Time is in fact the natural dimension in which to measure collision risk. Consider that when negotiating rush-hour traffic, ski slopes, or a crowded concert venue, we generally navigate by an intuitive sense of our reaction time rather than an explicit evaluation of relative distances and speeds. TTC also captures both distance and speed in one measure: something far away but fast can be more dangerous than something close but still. In fact, this concept has been applied to collision avoidance with static obstacles \cite{bosnak2017efficient}, and integrated with the previously mentioned VO-CBF framework \cite{roncero2025multi}.

\noindent\textbf{Contributions.}
The main contributions of this work are:
\begin{itemize}
    \item \textbf{Time-to-collision metric:} We introduce a \textit{worst-case} extension of TTC called adversarial TTC (aTTC) where it is assumed that one agent drives towards collision under maximum control effort. Assuming controllable dynamics, this metric is always finite and thus serves as a natural fit for real-time control. %in 3D environments. 
    \item \textbf{TTC-CBF:} We demonstrate how to construct a stochastic CBF using aTTC, and show that its use in control design confers a lower bound on the probability that a multi-agent system remains collision-free. 
    \item \textbf{Experimental validation:}  We train a differentiable surrogate model for aTTC, synthesize it in a stochastic CBF-QP control law, and compare this approach against existing methods in proactive CBF control on a set of planar collision avoidance trials. We also highlight its efficacy on a 3D pursuit-evasion scenario.
\end{itemize}

The paper is organized as follows. In Section \ref{sec:prob}, we review preliminaries and formalize the problem under consideration. Section \ref{sec:ttc-cbf} introduces our solution, the TTC-CBF. In Section \ref{sec:results}, we demonstrate the efficacy of our method over existing approaches in two simulated case studies, and in Section \ref{sec:conclusion}, we conclude with a summary and directions for future work.

\section{Problem Formulation}\label{sec:prob}
\textbf{Notation}: $\R$ and $\R_+$ denote the set of real and non-negative real numbers respectively.
A bolded $\state_t$ denotes a vector stochastic process at time $t$.
The trace of a matrix $\bb{M} \in \R^{n \times n}$ is $\textrm{Tr}(\bb{M})$, and $\blkdiag(\bb{M}_1, \dots, \bb{M}_n)$ denotes the block-diagonal matrix with diagonal blocks $\bb{M}_1, \dots, \bb{M}_n$. 
The Lie derivative of a function $\psi:\mathbb R^n\rightarrow \mathbb R$ along a vector field $f:\mathbb R^n\rightarrow\mathbb R^n$ at a point $x\in \mathbb R^n$ is $L_f\psi(x) \triangleq \frac{\partial \psi}{\partial x} f(x)$.

Consider a system of $N$ agents, where each agent $i \in \mathcal{I}$ evolves according to the stochastic differential equation
\begin{equation}\label{eq:agent-dynamics}
    \mathrm{d}\state_t^i = \mu^i(\state_t^i, \control_t^i)\,\mathrm{d}t + \sigma^i(\state_t^i)\,\mathrm{d}\W_t^i, \quad \state_0^i \in \Xset_0^i,
\end{equation}
where $\state_t^i \in \Xset^i \subseteq \R^n$ denotes the state and $\Xset_0^i$ the compact initial set, the control $\control_t^i = k^i(\state_t^1,\hdots,\state_t^N): \Xset^1 \times \hdots \times \Xset^N \to \Uset^i \subseteq \R^m$ is generated by a memoryless, state-feedback control law, and $\W_t^i$ is a standard $q$-dimensional Wiener process on a complete probability space. The drift is control-affine,
\begin{equation}\label{eq:drift}
    \mu^i(\state, \control) \triangleq f^i(\state) + g^i(\state)\control,
\end{equation}
where $f^i: \Xset^i \to \R^n$, $g^i: \Xset^i \to \R^{n \times m}$, and diffusion term $\sigma^i: \Xset^i \to \R^{n \times q}$ are all locally Lipschitz.
We let $\Phi^{k^i}(\state^i, \tau)$ denote the flow of the deterministic closed-loop dynamics of \eqref{eq:agent-dynamics} under control policy $k^i$, mapping initial condition $\state^i$ at time $t$ to the state at time $t + \tau$.
Stacking the individual states, inputs, and disturbances yields the joint system
\begin{equation}\label{eq:joint-dynamics}
    \mathrm{d}\state_t = \big(F(\state_t) + G(\state_t)\control_t\big) + \Sigma(\state_t)\mathrm{d}\W_t, \quad \state_0 \in \Xset_0,
\end{equation}
where $\state = (\state^1,\dots,\state^{N}) \in \Xset$, $\control = k(\state) = (\control^1,\dots,\control^{N}) \in \mathcal{U}$, $\W = (\W^1,\dots,\W^{N})$, and $F(\state) = [f^1(\state^1)^\top, \dots, f^N(\state^N)^\top]^\top$, and $G(\state) = \blkdiag\big(g^1(\state^1), \dots, g^N(\state^N)\big)$.

Over any compact domain $\mathcal{D} \subset \Xset$, the above conditions guarantee existence of a strong solution to~\eqref{eq:joint-dynamics}. For strong solutions, the (infinitesimal) generator is defined as follows.
\begin{definition}\hspace{-0.2pt}\cite[Def. 7.3.1]{Oksendal2003Stochastic}\label{def.generator}
    The (infinitesimal) generator $\mathcal{A}$ of $\state_t$ is defined by
    \begin{equation*}%\label{eq:generator}
        \mathcal{A}\psi(\bb{y}) = \lim_{t \downarrow 0}\frac{\mathbb{E}\left[\psi(\state_t) \; | \; \state_0 = \bb{y}\right] - \psi(\bb{y})}{t},
    \end{equation*}
    where $\psi: \R^n \mapsto \R$ belongs to  $\mathcal{D}_\mathcal{A}$, the set of all functions such that the limit exists for all $\state \in \R^n$.
\end{definition}
Analogous to the Lie derivative for deterministic systems, the generator characterizes the expectation of the derivative of a function $\psi$ over the trajectories of \eqref{eq:joint-dynamics}. By \cite[Thm. 7.3.3]{Oksendal2003Stochastic}, for a twice continuously differentiable function $\psi$ with compact support, the generator $\mathcal{A}$ of $\state_t$ is described by
\begin{equation}
    \scalebox{0.9}{$\displaystyle\mathcal{A}\psi(\state) = L_F \psi(\state) + L_G \psi(\state)k(\state) + \frac{1}{2}\textrm{Tr}\left(\Sigma(\state)^T\frac{\partial^2 \psi}{\partial \state^2}\Sigma(\state)\right), \nonumber $}
\end{equation}
alternatively denoted $\mathcal{A}\psi(\state, \control)$ by replacing $k(\state)$ with $\control$.

The objective of this paper is to design a memoryless, state-feedback controller $\control^i_t = k^i(\state_t)$ for each agent $i \in \mathcal{I}$ such that the probability of any inter-agent collision over $[0,T]$ is bounded from above.
The predominant approach in the literature is to define collision-free states as those for which inter-agent spatial distances remain above a safe threshold. 
We instead use temporal distance: the predicted time remaining until a spatial collision, and thus define the \emph{time-to-collision} (TTC) between agents $i$ and $j$ as
\begin{equation}\label{eq:time-to-collision}
\scalebox{0.95}{$\displaystyle \tau_{ij}^*(\state) \triangleq \inf\left\{\tau \geq 0 \,\middle|\, \|\Phi^{k^i}(\state^i, \tau) - \Phi^{k^j}(\state^j, \tau)\| \leq 2r\right\},$}
\end{equation}
where 
$r > 0$ is the collision radius.
We also introduce a critical time $\tau_c > 0$ as a lower bound on the admissible inter-agent TTCs so that all states $\state \in \Xset$ for which $\tau_{ij}^*(\state) - \tau_c \geq 0$ are deemed safe. 
Practically, $\tau_c$ quantifies the minimum time available for an intervening policy (e.g., emergency braking) to respond before a collision occurs.
Stochastic control barrier functions (SCBFs) are one tool for limiting the probability of exiting such a set of safe states.

% \textbf{Objective:} Design control inputs $u_i(t)$ for all agents to ensure collision avoidance over a finite time horizon $t \in [0,T]$.
Let $B: \Xset \to \R_+$ be a twice continuously differentiable function encoding the unsafe (colliding) and safe (collision-free) configurations as $\mathcal{C}$ and $\mathcal{S}$ respectively, such that
\begin{align}
    \mathcal{C} &= \{x \in \Xset \mid B(x) \geq 1\}, \label{eq:collision-set} \\
    \mathcal{S} &= \{x \in \Xset \mid 0 \leq B(x) < 1\}. \label{eq:safe-set}
\end{align}
It is assumed that $B(\state_0) \leq \gamma < 1$.
Let $\rho$ denote the probability that the system enters the unsafe set within a time interval of length $T <\infty$, i.e.,
\begin{equation}\label{eq:rho}
    \rho \triangleq \mathbb{P}\left\{\exists t \in [0, T]: \state \in \mathcal{C} \mid B(\state_0) \leq \gamma\right\}.
\end{equation}
% Stochastic control barrier functions (SCBFs) are one tool to limit the risk of a system becoming unsafe.
\begin{definition}[\hspace{-0.3pt}\cite{Yaghoubi2021RiskBounded}]\label{def:scbf}
    Given the set $\mathcal{S} \subset \mathcal{D} \subset \Xset$ defined by \eqref{eq:safe-set} for a twice continuously differentiable function $B\colon \Xset \to \R_+$, the function $B$ is a \textbf{stochastic control barrier function} (SCBF) for \eqref{eq:joint-dynamics} on $\mathcal{D}$ if there exist $\alpha, \beta \geq 0$ such that, for all $x \in \mathcal{D}$,
    \begin{equation}\label{eq:cbf-condition}
        \inf_{\control \in \Uset}\mathcal{A} B(\state, \control) \leq -\alpha B(\state) + \beta.
    \end{equation}
\end{definition}
\begin{theorem}[\hspace{-0.3pt}\cite{Yaghoubi2021RiskBounded}]\label{thm:scbf-safety}
    Consider a stochastic system of the form \eqref{eq:joint-dynamics} and a set $\Sset$ defined by a function $B$ as in \eqref{eq:safe-set}. If $B$ is a SCBF for \eqref{eq:joint-dynamics} over $\Sset$, then
    \begin{equation}\label{eq:scbf-safety-probability}
        \rho \leq \begin{cases}
        1 - \left(1 - \gamma\right)e^{-\beta T}; & \alpha > 0 \; \textrm{and} \; \alpha \geq \beta, \\
        \left(\gamma + (e^{\beta T} - 1)\frac{\beta}{\alpha}\right)e^{-\beta T}; & \alpha > 0 \; \textrm{and} \; \alpha < \beta, \\
        \gamma + \beta T; & \alpha=0.
        \end{cases}
    \end{equation}
\end{theorem}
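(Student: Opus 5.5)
The plan is to reduce all three cases of \eqref{eq:scbf-safety-probability} to one template: build a nonnegative supermartingale $Y_t$ from $B(\state_t)$, stopped at the first exit time from $\Sset$, with two properties. First, $Y_0$ equals the claimed bound when $B(\state_0)=\gamma$. Second, $Y_t\geq 1$ whenever $\state_t\in\mathcal{C}$. Ville's maximal inequality for nonnegative supermartingales then gives
\begin{equation*}
\rho \leq \mathbb{P}\Big\{\sup_{t\in[0,T]} Y_t \geq 1\Big\} \leq \mathbb{E}[Y_0],
\end{equation*}
and monotonicity in $\gamma$ handles $B(\state_0)\leq\gamma$.

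Concretely, I would let $\tau$ be the first time $\state_t$ leaves $\Sset$, capped at $T$, and work with the stopped process $\state_{t\wedge\tau}$. I take $\control_t$ to be a feedback attaining (or approaching) the infimum in \eqref{eq:cbf-condition}, so that $\mathcal{A}B \leq -\alpha B+\beta$ along trajectories. For a time-dependent function $V(\state,t)$, Dynkin's formula (\cite[Thm. 7.4.1]{Oksendal2003Stochastic}) applied up to $\tau$ shows that $V(\state_{t\wedge\tau},t\wedge\tau)$ is a supermartingale whenever $\partial_t V+\mathcal{A}V\leq 0$ on $\Sset\times[0,T]$. This needs localization, since $B$ need not have compact support; because the process is stopped in the bounded region $\mathcal{D}$, one can multiply $B$ by a smooth cutoff equal to one on $\mathcal{D}$ without changing anything. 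On $\Sset$ we also have $0\leq B<1$, which is what makes the case comparisons below work.

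The cases then go as follows.

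\textbf{Case $\alpha=0$.} Take $Y_t = B(\state_t)+\beta(T-t)$. It is nonnegative, its drift is $\mathcal{A}B-\beta\leq 0$, and $Y_t\geq B(\state_t)\geq 1$ on $\mathcal{C}$. This gives $\rho\leq\gamma+\beta T$.

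\textbf{Case $\alpha\geq\beta>0$.} On $\Sset$,
\begin{equation*}
-\alpha B+\beta \leq \beta(1-B),
\end{equation*}
because $(\alpha-\beta)B\geq 0$. Set $Y_t = 1-(1-B(\state_t))e^{\beta(t-T)}$. Its drift is
\begin{equation*}
e^{\beta(t-T)}\big(\mathcal{A}B-\beta(1-B)\big)\leq 0,
\end{equation*}
so it is a supermartingale. It is nonnegative since $B\geq0$ and $t\leq T$, and it is at least $1$ once $B\geq1$. This gives $\rho\leq 1-(1-\gamma)e^{-\beta T}$.

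\textbf{Case $\alpha<\beta$.} Put $c=\beta/\alpha>1$, so $-\alpha B+\beta=\alpha(c-B)\leq\beta(c-B)$ because $c-B>0$ on $\Sset$. Set $Y_t=c-(c-B(\state_t))e^{\beta(t-T)}$. It is again a nonnegative supermartingale, and on $\mathcal{C}$ we get $Y\geq c-(c-1)e^{\beta(t-T)}\geq 1$. This gives $\rho\leq c-(c-\gamma)e^{-\beta T}=\big(\gamma+(e^{\beta T}-1)\tfrac{\beta}{\alpha}\big)e^{-\beta T}$.

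The only delicate step is the first one, the rigorous passage from the generator inequality to the supermartingale property. It requires three things: that $Y$ is in the domain of the generator (handled by the $C^2$ cutoff argument above); that stopping at $\tau$ keeps the comparison inequalities valid, since they use $B<1$ and hold only before exit; and a right-continuity and measurability check so that Ville's inequality applies to the continuous stopped process. The algebraic choice of $Y$ in each case is the Kushner-style trick, and it is routine once the three comparison inequalities above are in hand.
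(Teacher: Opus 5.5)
Your argument is correct, and it is essentially the standard proof: the paper does not prove this theorem itself but quotes it from \cite{Yaghoubi2021RiskBounded}, where the bound goes back to Kushner's finite-time supermartingale estimate, and your approach (a time-dependent nonnegative supermartingale built from $B$, stopped at the exit from $\Sset$, followed by the maximal inequality) is exactly that argument, with your three processes reproducing the three stated bounds; the same $Y_t$ also covers $\beta=0<\alpha$, where it reduces to $B(\state_t)$. The one phrase to tighten is ``or approaching'' the infimum: the bound needs the applied feedback to satisfy $\mathcal{A}B(\state,k(\state))\leq-\alpha B(\state)+\beta$ on $\Sset$ exactly, which is how the paper uses the result via $k(\state)\in\mathcal{V}_{ij}(\state)$.
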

Because the joint dynamics~\eqref{eq:joint-dynamics} inherit a block-diagonal structure from the decoupled agent dynamics~\eqref{eq:agent-dynamics}, the Lie derivatives present in $\mathcal{A}B(\state)$ in~\eqref{eq:cbf-condition} decompose as
\begin{equation*}
    L_F h(x) = \sum_{i=1}^{N} \frac{\partial h}{\partial x_i} f_i(x_i), \quad
    L_G h(x)u = \sum_{i=1}^{N} \frac{\partial h}{\partial x_i} g_i(x_i)\,u_i,
\end{equation*}
so that the SCBF condition can be evaluated and enforced using each agent's local dynamics, while the safety guarantee remains rigorously defined at the joint level.

% While the collision set $\mathcal{C}$, and by extension the function $B$, can be defined in many ways, for example with distance metrics like a static collision radius or future proximity under linearly predicted trajectories \cite{black2023future}, or with velocity-based metrics like velocity obstacles \cite{huang2025dynamic} or collision cones \cite{tayal2026collision}, a time-to-collision provides a more intuitive and practical notion of safety.

In Section \ref{sec:ttc-cbf}, we introduce the notion of TTC-CBFs for collision-free multi-agent control design.

\section{Time-to-Collision CBFs}\label{sec:ttc-cbf}
% To address the worst-case collision avoidance problem described above, we reinterpret the interactions with other agents and disturbances as an aggregate uncertainty acting on the controlled agent. 
% This allows the safety objective to be formulated as a robust control problem, where guarantees are required to hold for all admissible realizations of this uncertainty. 
% As an example of a worst-case collision scenario we consider an adversarial evader-pursuer configuration, where the \textit{evader} agent ($i=1$) attempts to avoid collision with a group of pursuing agents ($i>1$). 
% We focus on a single evader, but the following framework can be seamlessly extended to any number of evaders and pursuers.

In this section, we first introduce our time-to-collision (TTC) metric, which we use to formulate our Stochastic TTC-CBF. We then define the adversarial time-to-collision (aTTC), which quantifies the earliest time a pursuing agent could collide with a non-reactive agent. We construct a fast, fully differentiable surrogate model for this metric to enable efficient computation, and incorporate the aTTC model into a predictive control law, providing probabilistic collision avoidance guarantees.

\subsection{Time-to-Collision}
% We now define the following metric for \emph{time-to-collision} (TTC) between agents $i$ and $j$,
% \begin{equation}\label{eq:time-to-collision}
%     \tau_{ij}^*(\state)  = \inf\big\{\tau \geq 0 \mid \|\Phi(\state^i, \tau) - \Phi(\state^j, \tau)\| \leq 2r\big\},
% \end{equation}
% where $\Phi(\state^i, \tau)$ denotes the flow of the deterministic closed-loop dynamics of \eqref{eq:agent-dynamics} under control $\control^i = k(\state^i)$, mapping initial condition $\state$ at time $t$ to the state at time $t + \tau$, and $r > 0$ is the collision radius. We also introduce a critical time $\tau_c > 0$ as a lower bound on the inter-agent TTCs $\tau_{ij}^*$. 
% Practically, $\tau_c$ quantifies the minimum time available for a backup policy (e.g., emergency braking) to intervene before a collision occurs.

We are now ready to introduce one of our main contributions, a Stochastic Time-to-Collision CBF.
\begin{defn}\label{def:ttc-cbf}
    Given two agents $i,j \in \mathcal{I}$, the TTC metric \eqref{eq:time-to-collision}, and a critical time $\tau_c > 0$, the \textbf{Stochastic Time-to-Collision CBF} (S-TTC-CBF) is defined as
    \begin{equation}\label{eq:ttc-cbf}
        % B_{ij}(\state) = \frac{\tau_c}{\tau_{ij}^*(\state)},
        B_{ij}(\state) = \exp(\tau_c - \tau_{ij}^*(\state)),
    \end{equation}
    with the associated safe (collision-free) set
    \begin{equation}\label{eq:ttc-set}
        \Sset^{ij} = \left\{\state \in \Xset \mid \tau_{ij}^*(\state) > \tau_c\right\}.
    \end{equation}
\end{defn}
The exponential form in~\eqref{eq:ttc-cbf} is deliberate: by normalizing against $\tau_c$, the function $B_{ij}$ takes values in $(0, 1)$ on the set $\Sset^{ij}$, making it directly compatible with the unit-interval requirements of stochastic CBF theory.

For each pair $(i,j)$, define the set of admissible controls satisfying the SCBF condition:
\begin{equation}
    \mathcal{V}_{ij}(\state) \triangleq \{\control \in \Uset \mid \mathcal{A}B_{ij}(\state, \control) \leq -\alpha B_{ij}(\state) + \beta\}.
\end{equation}
The following theorem establishes the probabilistic safety guarantee provided by S-TTC-CBFs.
% \begin{thm}\label{thm:probabilistic-safety}
%     Consider the stochastic multi-agent system \eqref{eq:joint-dynamics} with S-TTC-CBFs $B_{ij}$ defined by \eqref{eq:ttc-cbf} and associated safe sets $\Sset^{ij}$ defined by \eqref{eq:ttc-set} for all pairs $i,j \in \mathcal{A}$. If a Lipschitz continuous control law satisfies $\control(\state) \in \bigcap_{i,j}\mathcal{V}_{ij}(\state)$, then the trajectories of \eqref{eq:joint-dynamics} are collision-free with probability at least $1 - \rho$, where $\rho$ is given by \eqref{eq:rho}.
% \end{thm}
\begin{thm}\label{thm:probabilistic-safety}
    Consider the stochastic multi-agent system~\eqref{eq:joint-dynamics} with S-TTC-CBFs $B_{ij}$ defined for all pairs $(i,j) \in \mathcal{P} = \{(i,j) : i,j \in \mathcal{A},\; i < j\}$ by~\eqref{eq:ttc-cbf} for sets $\Sset^{ij}$ defined by~\eqref{eq:ttc-set}. 
    Suppose that each $B_{ij}(\state_0) \leq \gamma_{ij} < 1$ with SCBF parameters $\alpha_{ij}, \beta_{ij} \geq 0$. 
    If a Lipschitz continuous control law satisfies $k(\state) \in \bigcap_{(i,j) \in \mathcal{P}} \mathcal{V}_{ij}(\state)$, then the trajectories of~\eqref{eq:joint-dynamics} are collision-free with probability at least $1 - \sum_{(i,j) \in \mathcal{P}} \rho_{ij}$, where each $\rho_{ij}$ is given by~\eqref{eq:scbf-safety-probability} with $(\alpha, \beta, \gamma) = (\alpha_{ij}, \beta_{ij}, \gamma_{ij})$.
\end{thm}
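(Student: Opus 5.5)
The plan is to apply Theorem~\ref{thm:scbf-safety} separately to each pair $(i,j)\in\mathcal{P}$ and then combine the pairwise bounds with a union bound.

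\emph{Step 1: each pair satisfies the hypotheses of Theorem~\ref{thm:scbf-safety}.} Fix $(i,j)\in\mathcal{P}$. The collision set of Theorem~\ref{thm:scbf-safety} corresponds here to $\mathcal{C}^{ij}=\{\state : B_{ij}(\state)\geq 1\}=\{\state:\tau_{ij}^*(\state)\leq\tau_c\}$, and the safe set to $\Sset^{ij}$ in \eqref{eq:ttc-set}. By assumption $B_{ij}(\state_0)\leq\gamma_{ij}<1$.

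The closed-loop law $k$ is Lipschitz, and $k(\state)\in\mathcal{V}_{ij}(\state)$ for every $\state$. Hence along the closed loop
\[
\mathcal{A}B_{ij}(\state,k(\state))\leq-\alpha_{ij}B_{ij}(\state)+\beta_{ij}.
\]
This is exactly the inequality \eqref{eq:cbf-condition} evaluated at the input actually applied. That pointwise inequality is stronger than the $\inf$ condition, and it is what the supermartingale argument behind Theorem~\ref{thm:scbf-safety} uses. Applying Theorem~\ref{thm:scbf-safety} with $(\alpha,\beta,\gamma)=(\alpha_{ij},\beta_{ij},\gamma_{ij})$ gives
\[
\mathbb{P}\{\exists t\in[0,T]:\state_t\in\mathcal{C}^{ij}\}\leq\rho_{ij}.
\]

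\emph{Step 2: no pair enters its unsafe set implies no collision.} I will show that a spatial collision between agents $i$ and $j$ at time $t$ forces $\state_t\in\mathcal{C}^{ij}$. If $\|\state^i_t-\state^j_t\|\leq 2r$, where positions are read from the states as in \eqref{eq:time-to-collision}, then $\tau=0$ belongs to the set in \eqref{eq:time-to-collision}. So $\tau^*_{ij}(\state_t)=0<\tau_c$, which gives $B_{ij}(\state_t)=e^{\tau_c}>1$.

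Define the events $E_{ij}=\{\exists t\in[0,T]:\state_t\in\mathcal{C}^{ij}\}$. Then the event that some collision occurs is contained in $\bigcup_{(i,j)\in\mathcal{P}}E_{ij}$.

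\emph{Step 3: union bound.} By Boole's inequality,
\[
\mathbb{P}\Big(\bigcup_{(i,j)\in\mathcal{P}} E_{ij}\Big)\leq\sum_{(i,j)\in\mathcal{P}}\rho_{ij}.
\]
Taking complements, the trajectories are collision-free on $[0,T]$ with probability at least $1-\sum_{(i,j)\in\mathcal{P}}\rho_{ij}$, which is the claim.

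\emph{Main obstacle: regularity.} Theorem~\ref{thm:scbf-safety} requires $B_{ij}$ to be twice continuously differentiable, and the generator formula requires compact support or a localization. The map $\tau_{ij}^*$ is defined by an infimum and need not be smooth. It can jump, for instance at grazing trajectories, or become infinite when no collision occurs. I would handle this in three parts.
\begin{itemize}
\item Treat smoothness of $B_{ij}$ on the relevant domain $\mathcal{D}$ as a standing assumption implicit in calling $B_{ij}$ an S-TTC-CBF. This is in keeping with Definition~\ref{def:scbf}, which presupposes $B\in C^2$.
\item Note that $\tau^*_{ij}=\infty$ gives $B_{ij}=0$, which stays consistent with $B:\Xset\to\R_+$.
\item Localize with a stopping time at the exit of a compact $\mathcal{D}$. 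This ensures strong solutions exist and Dynkin's formula applies, as the existing theorem already does.
\end{itemize}
With these conventions the proof reduces entirely to Steps 1--3.
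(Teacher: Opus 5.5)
Your proposal is correct and follows the paper's proof. Like the paper, you apply Theorem~\ref{thm:scbf-safety} to each pair via the closed-loop condition $k(\state)\in\mathcal{V}_{ij}(\state)$ and then take a union bound; your Step~2 ($\tau_{ij}^*=0<\tau_c$ at a spatial collision, hence $B_{ij}\geq 1$) makes explicit a containment the paper simply asserts as an equality of events, and your regularity caveats are assumptions the paper also leaves implicit.
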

\begin{proof}
    % Let $\mathcal{P} = \{(i,j) : i,j \in \mathcal{A},\; i < j\}$ denote the set of all distinct agent pairs, with $|\mathcal{P}| = \binom{N}{2}$.
    Fix a pair $(i,j) \in \mathcal{P}$. 
    By hypothesis, $B_{ij}$ defined by~\eqref{eq:ttc-cbf} is an SCBF for \eqref{eq:joint-dynamics} with parameters $\alpha_{ij}, \beta_{ij} \geq 0$ and initial condition satisfying $B_{ij}(\state_0) \leq \gamma_{ij} < 1$.
    Since $k(\state) \in \bigcap_{i,j} \mathcal{V}_{ij}(\state)$, the SCBF condition~\eqref{eq:cbf-condition} holds for $B_{ij}$ along the closed-loop trajectories of~\eqref{eq:joint-dynamics}.
    Applying Theorem~\ref{thm:scbf-safety} to the pair $(i,j)$ yields
    \begin{equation}
        \mathbb{P}\!\left\{\exists\, t \in [0,T] :
        B_{ij}(\state_t) \geq 1\right\} \leq \rho_{ij},
    \end{equation}
    where $\rho_{ij}$ is given by~\eqref{eq:scbf-safety-probability} with $(\alpha, \beta, \gamma) = (\alpha_{ij}, \beta_{ij}, \gamma_{ij})$.

    A collision in the multi-agent system occurs if and only if at least one pair enters its unsafe set, i.e.,
    \begin{equation*}
    \begin{aligned}
        \big\{\exists\, t \in [0,T] : \state_t &\notin \textstyle\bigcap_{(i,j) \in \mathcal{P}} \Sset^{ij}\big\}
        = \\
        &\bigcup_{(i,j) \in \mathcal{P}} \big\{\exists\, t \in [0,T] :B_{ij}(\state_t) \geq 1\big\}.
    \end{aligned}
    \end{equation*}
    By the union bound,
    \begin{equation*}
        \mathbb{P}\!\left\{\exists\, t \in [0,T] : \state_t \notin
        \textstyle\bigcap_{(i,j) \in \mathcal{P}} \Sset^{ij}\right\}
        \leq \sum_{(i,j) \in \mathcal{P}} \rho_{ij}.
    \end{equation*}
\end{proof}
Note that when the SCBF parameters are uniform across all pairs ($\alpha_{ij} = \alpha$, $\beta_{ij} = \beta$, $\gamma_{ij} = \gamma$ for all $(i,j)$), the above simplifies to $\rho_{ij} = \rho$ for all pairs and the collision probability is $\sum_{(i,j) \in \mathcal{P}} \rho_{ij} = \binom{N}{2}\rho$.

By considering time-to-collision (TTC) between agents, one can account for the time required to execute a precautionary, evasive, or emergency maneuver (commonly referred to as a backup policy), rather than just their instantaneous separation and its rate of change. 
This approach naturally incorporates the dynamics of the system and ensures that agents have sufficient time to react and avoid collisions. 

We now define two associated sets:
\begin{align}
    \Sset_0^{ij} &= \{\state \in \Xset \mid \tau_{ij}^*(\state) > 0\}, \label{eq:positive-ttc} \\
    \Sset_B^{ij} &= \{\state \in \Xset \mid \tau_{ij}^*(\state) > 2\tau_c\}, \label{eq:backup-set}
\end{align}
comprising, respectively, the states with positive TTC and those with TTC exceeding $2\tau_c$.
\begin{assumption}\label{ass:backup}
    For each agent $i \in \mathcal{I}$, there exists a backup policy $\pi_B^i\colon \Xset^i \to \Uset^i$ such that, from any state $\state$ with $\tau_{ij}^*(\state) \leq \tau_c$ for some $j \neq i$, the closed-loop trajectory of agent $i$ under $\pi_B^i$ satisfies:
    \begin{enumerate}
        \item \textbf{(Recovery)} For the critical pair $(i,j)$,
        \begin{equation}
        \begin{aligned}
            \mathbb{P}\!\Big\{\Phi(\state^i, t{+}\tau) &\in \Sset_0^{ij},\; \forall \tau \in [0, \tau_c],\;\text{and} \\
            &\Phi(\state^i, t{+}\tau_c) \in \Sset_B^{ij}\Big\} \geq p_B,
        \end{aligned}
        \end{equation}
        \item \textbf{(Non-interference)} For all other pairs $(i,\ell)$ with $\ell \neq j$, the backup policy does not violate the SCBF condition, i.e.,
        \begin{equation}
            \mathcal{A}B_{i\ell}(\state, \pi_B^i(\state^i), \control^\ell) \leq -\alpha_{i\ell} B_{i\ell}(\state) + \beta_{i\ell},
        \end{equation}
        for all $\control^\ell \in \Uset^\ell$,
    \end{enumerate}
    for some $p_B \in (0,1)$.
\end{assumption}
% \begin{assumption}\label{ass:backup}
%     For each agent $i \in \mathcal{I}$, there exists a backup policy $\pi_B^i\colon \Xset^i \to \Uset^i$ such that, from any state $\state^i$ with $\tau_{ij}^*(\state) \geq \tau_c$ for any $j \neq i$, the closed-loop trajectory under $\pi_B^i$ satisfies
%     \begin{equation}
%     \begin{aligned}
%         \mathbb{P}\!\Big\{\Phi(\state^i, t+&\tau) \in \Sset_0^{ij},\; \forall \tau \in [0, \tau_c],\;\text{and} \\
%         &\Phi(\state^i, t{+}\tau_c) \in \Sset_B^{ij}\Big\} \geq p_B,
%     \end{aligned}
%     \end{equation}
%     for some $p_B \in (0,1)$.
% \end{assumption}
In short, this assumes that, with probability at least $p_B$, the backup policy keeps the system collision-free over the interval $[t, t+\tau_c]$ and returns it to the set $\Sset_B^{ij}$ by time $t + \tau_c$.
\begin{corollary}
    Suppose that Assumption~\ref{ass:backup} holds in addition to the premises of Theorem \ref{thm:probabilistic-safety}. 
    For each agent $i \in \mathcal{I}$, let $m^i(t) \in \{0, 1\}$ with $m^i(0) = 0$ and dynamics:
    \begin{equation}
        m^i(t^+) = \begin{cases}
            1, & \text{if } \, \exists j \neq i: \tau_{ij}^*(\state) \leq \tau_c, \\
            0, & \text{if } \, \tau_{ij}^*(\state) \geq 2\tau_c, \, \forall j \neq i \\
            m^i(t), & \text{otherwise.}
        \end{cases}
    \end{equation}
    Under the control policy where each agent $i \in \mathcal{I}$ applies
    \begin{equation}
        k^i(\state) = \begin{cases}
            u^i(\state), & \text{if } m^i(t) = 0, \\
            \pi_B^i(\state), & \text{if } m^i(t) = 1,
        \end{cases}
    \end{equation}
    where the joint nominal control $\control(\state) = (u^1(\state), \dots, u^N(\state))$ satisfies $\control(\state) \in \bigcap_{(i,j) \in \mathcal{P}} \mathcal{V}_{ij}(\state)$,
    the trajectories of~\eqref{eq:joint-dynamics} are collision-free with probability at least $1 - (1 - p_B)\sum_{(i,j) \in \mathcal{P}} \rho_{ij}$.

    % Under the following control policy,
    % \begin{equation}
    %     k(\state) = \begin{cases}
    %         \control(\state) \in \mathcal{V}_{ij}(\state), & \text{if } m(t) = 0, \\
    %         \pi_B(\state), & \text{if } m(t) = 1,
    %     \end{cases}
    % \end{equation}
    % the trajectories of \eqref{eq:joint-dynamics} are collision-free with probability at least $1 - (1 - \binom{N}{2}p_B)\sum_{(i,j) \in \mathcal{P}} \rho_{ij}$.
\end{corollary}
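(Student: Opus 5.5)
The plan is to reduce the corollary to Theorem~\ref{thm:probabilistic-safety} by a conditioning argument. A collision can occur only if some pair first crosses into its unsafe set and the backup policy then fails to recover. Write $E_{ij}$ for the event that the pair $(i,j)$ reaches $\tau_{ij}^*(\state_t)\le\tau_c$ at some $t\in[0,T]$ while the nominal law is active. Write $F_{ij}$ for the event that, after that first crossing, the backup trajectory leaves $\Sset_0^{ij}$ before reaching $\Sset_B^{ij}$. The collision event is then contained in $\bigcup_{(i,j)\in\mathcal{P}}(E_{ij}\cap F_{ij})$. This holds because a collision requires $\tau_{ij}^*=0$, i.e.\ an exit from $\Sset_0^{ij}$, for some pair, and any such exit must be preceded by a passage through $\{\tau_{ij}^*\le\tau_c\}$.

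The first step bounds $\mathbb{P}(E_{ij})\le\rho_{ij}$. While $m^i=m^j=0$, the applied input is the nominal $\control(\state)\in\bigcap\mathcal{V}_{ij}(\state)$. Once some agent switches, the non-interference clause of Assumption~\ref{ass:backup} keeps the SCBF inequality valid for every pair other than the critical one. The SCBF condition for $B_{i\ell}$ therefore holds along the whole closed loop up to the first time that pair itself hits $B_{i\ell}\ge 1$. Theorem~\ref{thm:scbf-safety}, applied via a stopped process (Dynkin's formula up to that stopping time, exactly as in the proof of Theorem~\ref{thm:probabilistic-safety}), gives $\mathbb{P}(E_{ij})\le\rho_{ij}$.

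The second step uses the strong Markov property at the stopping time $\theta_{ij}=\inf\{t:\tau_{ij}^*(\state_t)\le\tau_c\}$. At that time $m^i$ jumps to $1$, and the recovery clause of Assumption~\ref{ass:backup} gives $\mathbb{P}(F_{ij}\mid\mathcal{F}_{\theta_{ij}},E_{ij})\le 1-p_B$. Hence $\mathbb{P}(E_{ij}\cap F_{ij})\le(1-p_B)\rho_{ij}$. On the complement of $F_{ij}$, the pair stays collision-free on $[\theta_{ij},\theta_{ij}+\tau_c]$ and ends in $\Sset_B^{ij}$. There the hysteresis resets $m^i=0$, and the nominal SCBF regime resumes from a state with $B_{ij}\le e^{-\tau_c}<1$. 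A union bound over $\mathcal{P}$ then yields the stated $1-(1-p_B)\sum\rho_{ij}$.

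The main obstacle is repeated switching. After a successful recovery the pair may re-enter the critical set later in $[0,T]$, so the single-crossing decomposition undercounts failures. My first attempt is to argue that the bound $\rho_{ij}$ from Theorem~\ref{thm:scbf-safety} already controls the probability of \emph{any} crossing of level $1$ on $[0,T]$. Its martingale/supermartingale derivation bounds the probability of the running supremum of $B_{ij}$ exceeding $1$, so I would treat the first crossing only and note that recovery from it returns the pair to a strictly better initial level $e^{-\tau_c}\le\gamma_{ij}$. If that is insufficient, I would restrict to the event of at most one critical episode per pair, or state the result for the first-passage collision event. The other delicate points are the measurability of the hybrid switching law, needed for the strong Markov step, and the fact that Assumption~\ref{ass:backup} is phrased in terms of the deterministic flow $\Phi$. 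I would interpret that clause as the probability under the stochastic closed loop.
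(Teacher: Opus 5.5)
Your decomposition is the same as the paper's. A collision of the pair $(i,j)$ requires an entry into $\{\tau_{ij}^*\le\tau_c\}$, which has probability at most $\rho_{ij}$ by Theorem~\ref{thm:scbf-safety}. That entry must be followed by a backup failure, which has conditional probability at most $1-p_B$ by the recovery clause, and a union bound finishes. Your single-episode bookkeeping is more careful than the paper's. The paper justifies $\mathbb{P}[E_1^{ij}]\le\rho_{ij}$ only ``when $m^i=0$ for all $i$'' and ignores agents already in backup because of other pairs; you handle that case with non-interference and a stopping time. But the obstacle you flag is a genuine gap, and your first attempt at it does not work.

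Theorem~\ref{thm:scbf-safety} bounds only the first passage of $B_{ij}$ to level $1$ under a continuously enforced SCBF condition. It says nothing about how often the pair re-enters $\{\tau_{ij}^*\le\tau_c\}$ after successful recoveries, and every re-entry is a new backup episode that can fail with probability up to $1-p_B$. The correct accounting is $\mathbb{P}(\text{collision of }(i,j))\le(1-p_B)\sum_{k\ge 1}\mathbb{P}(\text{$k$-th activation})$, and only the $k=1$ term is controlled by $\rho_{ij}$.

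Your claim $e^{-\tau_c}\le\gamma_{ij}$ is unfounded: $\gamma_{ij}$ only upper-bounds $B_{ij}(\state_0)$ and may lie well below $e^{-\tau_c}$. Even a restart at a low level gives, via the strong Markov property and Theorem~\ref{thm:scbf-safety} on the remaining horizon, only a per-episode re-entry probability $\rho_{ij}'$. That yields a geometric bound like $(1-p_B)\rho_{ij}/(1-\rho_{ij}')$, not $(1-p_B)\rho_{ij}$.

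There is a second leak in the same place. $m^i$ resets only when $\tau_{i\ell}^*\ge 2\tau_c$ for \emph{every} $\ell\neq i$, so agent $i$ may stay on $\pi_B^i$ beyond $t+\tau_c$. In that window neither the recovery clause, which covers only $[t,t+\tau_c]$, nor non-interference, which presupposes some $\tau_{i\ell}^*\le\tau_c$, constrains the pair $(i,j)$. Your sentence ``there the hysteresis resets $m^i=0$'' is guaranteed only when $N=2$. Restricting to a single episode proves a weaker statement.

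The paper's proof has exactly the same hole, unacknowledged. It defines $E_2^{ij}$ as ``some activation in $[0,T]$ followed by failure'' and asserts $\mathbb{P}[E_2^{ij}\mid E_1^{ij}]\le 1-p_B$, which Assumption~\ref{ass:backup} supports only for one activation. Closing the gap needs either an extra hypothesis or a weaker bound that counts re-entries. One such hypothesis is at most one activation per pair on $[0,T]$, together with control of the hysteresis window.
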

% \begin{proof}
%     The proof follows directly from Theorem \ref{thm:probabilistic-safety}, Assumption \ref{ass:backup}, and the independence of events $\omega_1 = \{\tau_{ij}^*(\state) \leq \tau_c\}$ and $\omega_2 = \{\Phi(\state^i, \pi_B^i, t+\tau) \in \Sset_0,\; \forall \tau \in [0, \tau_c],\;\text{and}
%         \Phi(\state^i, \pi_B^i, t{+}\tau_c) \in \Sset_B\}$.
% \end{proof}
\begin{proof}
    For each pair $(i,j) \in \mathcal{P}$, define the events
    \begin{align*}
        E_1^{ij} &\triangleq \left\{\exists\, t \in [0,T] : \tau_{ij}^*(\state_t) \leq \tau_c\right\}, \\
        E_2^{ij} &\triangleq \big\{\exists\, t \in [0,T] : m^i(t^-) = 0, m^i(t^+) = 1, \\
              &\quad\quad \{\Phi(\state_t, t+\tau_c) \notin \Sset_B^{ij} \;\text{or}\; \nonumber \\
              &\quad\quad \exists \tau \in [0, \tau_c]: \Phi(\state_t, t{+}\tau) \notin \Sset_0^{ij}\}\big\}, \nonumber \\
        E_C^{ij} &\triangleq \left\{\exists\, t \in [0,T] : \tau_{ij}^*(\state_t) \leq 0\right\}.
    \end{align*}
    That is, $E_1^{ij}$ denotes the CBF condition failing to maintain $\tau_{ij}^* > \tau_c$, $E_2^{ij}$ denotes backup activation followed by failure to recover to $\Sset_B^{ij}$ while remaining collision-free, and $E_C^{ij}$ denotes a collision.

    A collision between agents $i$ and $j$ requires both a breach of the CBF margin and a subsequent backup failure, so $E_C^{ij} \subseteq E_1^{ij} \cap E_2^{ij}$. Therefore,
    \begin{equation*}
        \mathbb{P}[E_C^{ij}] \leq \mathbb{P}[E_2^{ij} \mid E_1^{ij}]\,\mathbb{P}[E_1^{ij}].
    \end{equation*}
    When $m^i = 0$ for all $i \in \mathcal{I}$, the joint control satisfies $\control(\state) \in \bigcap_{(i,j) \in \mathcal{P}} \mathcal{V}_{ij}(\state)$, so the hypotheses of Theorem~\ref{thm:probabilistic-safety} hold and $\mathbb{P}[E_1^{ij}] \leq \rho_{ij}$.
    Conditioned on $E_1^{ij}$, when both agents $i$ and $j$ activate their backup policies the pair recovers successfully with probability at least $p_B$ by Assumption~\ref{ass:backup}, so $\mathbb{P}[E_2^{ij} \mid E_1^{ij}] \leq 1 - p_B$. Hence,
    \begin{equation*}
        \mathbb{P}[E_C^{ij}] \leq (1 - p_B)\,\rho_{ij}.
    \end{equation*}
    Any collision in the system requires at least one pairwise collision, so by the union bound,
    \begin{equation*}
        \mathbb{P}\!\left[\bigcup_{(i,j) \in \mathcal{P}} E_C^{ij}\right] \leq (1 - p_B) \sum_{(i,j) \in \mathcal{P}} \rho_{ij},
    \end{equation*}
    yielding collision-free probability at least $1 - (1 - p_B)\sum_{(i,j) \in \mathcal{P}} \rho_{ij}$.
    % A collision requires both a breach of the CBF margin and a subsequent backup failure, so $E_C^{ij} \subseteq E_1^{ij} \cap E_2^{ij}$. Therefore,
    % \begin{equation*}
    %     \mathbb{P}[E_C^{ij}]
    %     \leq \mathbb{P}[E_1^{ij} \cap E_2^{ij}]
    %     = \mathbb{P}[E_2^{ij} \mid E_1^{ij}]\,\mathbb{P}[E_1^{ij}].
    % \end{equation*}
    % By Theorem~\ref{thm:probabilistic-safety}, $\mathbb{P}[E_1] \leq \rho$.
    % Conditioned on $E_1$, the backup policy $\pi_B$ activates from a state with $\tau_{ij}^* \leq \tau_c$; by Assumption~\ref{ass:backup}, $\pi_B$ returns the system to $\Sset_B$ while remaining in $\Sset_0$ with probability at least $p_B$, so $\mathbb{P}[E_2 \mid E_1] \leq 1 - p_B$. Hence,
    % \begin{equation*}
    %     \mathbb{P}[E_C] \leq \rho(1 - p_B),
    % \end{equation*}
    % yielding collision-free probability $\mathbb{P}[E_C^c] \geq 1 - \rho(1 - p_B)$.
\end{proof}
Note that when $\tau_{ij}^*(\state) \leq \tau_c$, both agents $i$ and $j$ have $m^i(t^+) = m^j(t^+) = 1$ and independently switch to their respective backup policies $\pi_B^i$ and $\pi_B^j$.
In practice, backup policies depend on the system dynamics and often require pre-coordinated agreement in the multi-agent setting. 
The theoretical deconfliction probability achieved by a backup policy would be evaluated offline via, e.g., Monte Carlo analysis.

Thus far, the control policies governing the closed-loop flow $\Phi(\state, t)$ have been left unspecified, intentionally so as to provide a modular framework. We now fix a specific policy pair for defining a worst-case collision metric.

\subsection{Adversarial Time-to-Collision}
We build on the concept of TTC and define the \textit{adversarial time-to-collision} (aTTC) between agents $i, j \in \mathcal{I}$ as the earliest time at which the distance between $i$ and $j$ falls below the collision threshold $r$ under deterministic dynamics and the following assumptions: agent $i$ applies zero control ($\control^i_t = \mathbf{0}$) for all $t \in [0, T]$, while agent $j$ pursues $i$ at maximum acceleration and velocity. 
The aTTC thus captures the worst-case collision time between a nonreactive agent $i$ and a maximally adversarial agent $j$.
We ignore stochasticity when defining aTTC in order to arrive at a deterministic metric whose gradients are well defined and amenable to integration into a QP-based control law.
% \begin{defn}[aTTC]
%     Let each agent occupy a spherical region of space in $\R^3$ of radius $r$. The adversarial time-to-collision (aTTC) between agents i and j is then defined as 
%     \begin{equation}\label{eq:time-to-collision}
%         \tau^*  = \inf\big\{\tau \geq 0 \mid \|\Phi(\state^i, \mathbf{0}, \tau) - \Phi(\state^j, \pi_P^j(\state^j, \state^i), \tau)\| \leq 2r\big\},
%     \end{equation}
%     where $\pi_P^j$ is the maximally pursuant control policy.
% \end{defn}
\begin{definition}[aTTC]\label{def:attc}
    Given agents $i, j \in \mathcal{I}$ with $\state^i \in \Xset^i$, $\state^j \in \Xset^j$, the \textbf{adversarial time-to-collision} (aTTC) is
    \begin{equation}\label{eq:attc}
        \scalebox{0.95}{$\displaystyle
            \tau_{ij}^* = \inf\!\Big\{\tau \geq 0 \;\Big|\; \big\|\Phi^{\mathbf{0}}(\state^i, \tau) - \Phi^{\pi_P^j}\!\big(\state^j, \tau\big)\big\| \leq 2r\Big\},
        $}
    \end{equation}
    where $r > 0$ is the radius of each agent and $\pi_P^j\colon \Xset^j \times \Xset^i \to \Uset^j$ is the pursuit policy that maximizes the closing rate subject to velocity and acceleration bounds of agent $j$.
\end{definition}

%% Existance Proof
% \begin{lemma}\label{lemma:lem1}
%     If the trajectory path $S$ from $s_1$ to $s_2$ constitutes a feasible path for an agent moving at speed $v$, then  $S$ is a feasible path at any speed $v'<v$.
% \end{lemma}
% \begin{lemma}\label{lem:existance}
% For any \textit{controllable} dynamics (\ref{eq:agent-dynamics}) and agents $i,j$ with $v_{max,j}>v_{i}$ there exists at least one finite-time trajectory leading to collision and thus there exists a finite aTTC.  
% \end{lemma}
\begin{proposition}\label{lem:existence}
    Consider agents $i, j \in \mathcal{I}$ governed by~\eqref{eq:agent-dynamics} with agent $i$ applying zero control. If the maximum speed of agent $j$ exceeds that of agent $i$, i.e., $v_{\max}^j > v_{\max}^i$, and the dynamics of agent $j$ are controllable, then $\tau_{ij}^* < \infty$ for any initial configuration $(\state^i, \state^j) \in \Xset^i \times \Xset^j$.
\end{proposition}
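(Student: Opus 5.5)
The plan is to show finiteness of \eqref{eq:attc} directly: build one admissible pursuit trajectory for agent $j$ that reaches the collision ball of the non-reactive agent $i$ in finite time. The infimum over the closing-rate-maximizing policy $\pi_P^j$ is then bounded above by the collision time of this explicit strategy. We work with the deterministic flows $\Phi^{\mathbf{0}}$ and $\Phi^{\pi_P^j}$ only, as in Definition~\ref{def:attc}, so stochasticity plays no role.

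\textbf{Step 1 (bound the evader).} Since agent $i$ applies zero control and its speed never exceeds $v_{\max}^i$, its position satisfies $\|p^i(\tau)-p^i(0)\| \le v_{\max}^i \tau$ for all $\tau \ge 0$.

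\textbf{Step 2 (transient phase for the pursuer).} By controllability of agent $j$, there is a finite time $T_1$ and an admissible input steering $\state^j$ to a state with speed $v^\star$, where $v^\star$ is chosen with $v_{\max}^i < v^\star \le v_{\max}^j$. During $[0,T_1]$ the separation grows by at most $(v_{\max}^i+v_{\max}^j)T_1$, so at time $T_1$ the gap $D_1=\|p^j(T_1)-p^i(T_1)\|$ is finite.

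\textbf{Step 3 (pure-pursuit phase).} After $T_1$, agent $j$ keeps speed $v^\star$ and turns its heading toward $p^i(\tau)$. With $d=\|p^j-p^i\|$, whenever the heading is aligned with the line of sight we have $\dot d \le -(v^\star - v_{\max}^i) < 0$. The time to reach $d \le 2r$ is then at most $T_1 + T_2 + D_1/(v^\star - v_{\max}^i)$, which is finite. Here $T_2$ is a bound on the re-alignment time, also accounted for in the gap. Since $\pi_P^j$ maximizes the closing rate, its closing rate is pointwise at least that of this strategy. Comparing via the differential inequality on $d$ then gives $\tau_{ij}^* < \infty$.

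\textbf{Main obstacle.} The hardest step is Step 3. Acceleration (turn-rate) bounds mean the pursuer cannot instantly re-align its heading. The evader may also be dynamically moving, not just drifting, and "controllable" gives steering in finite time but not uniform bounds. To handle this, I would first use the fact that $\state^i$ evolves under $\Phi^{\mathbf{0}}$ with bounded speed, so the line-of-sight direction changes at rate at most $(v_{\max}^i+v^\star)/d$. This rate is small when $d$ is large, so a bounded turning capacity can track it until $d$ is of order the turning radius.

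For the final approach, rather than proving pursuit convergence, I would invoke controllability once more. Choose a point $q$ on the evader's (known, deterministic) future trajectory at a time $\bar\tau$ large enough that $q$ is reachable. This is possible because the pursuer's reachable positions expand at rate $v^\star > v_{\max}^i$, so they eventually contain the evader's position. Then steer $j$ to hit $q$ at time $\bar\tau$, which gives a finite upper bound on $\tau_{ij}^*$.
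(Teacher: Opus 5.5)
Your Steps 1--3 are the paper's proof almost verbatim. Controllability supplies a finite reorientation time, the separation grows by at most $(v_{\max}^i+v_{\max}^j)$ times that time, and afterwards the pursuer closes at the surplus speed. The paper's bound is yours with $v^\star=v_{\max}^j$ and $T_2=0$. The paper simply asserts that the closing rate stays at least $v_{\max}^j-v_{\max}^i$ for all $t>t_r$. So the turn-rate difficulty you isolate is glossed over there, and your intercept argument is the more careful way to get a finite capture time.

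The step that fails as written is the transfer to $\pi_P^j$, which the paper also leaves implicit. In \eqref{eq:attc} the infimum is over $\tau$ only and $\pi_P^j$ is fixed. Exhibiting some capturing strategy therefore bounds $\tau_{ij}^*$ only if $\pi_P^j$ is a minimum-time pursuit policy.

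Your justification does not work. You argue that $\pi_P^j$ has a pointwise larger closing rate, so a differential inequality on $d$ finishes. But pointwise maximality compares controls at a common state, while the two closed loops pass through different states. Moreover, when velocity is part of the state, no control changes $\dot d$ instantaneously at all.

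Greedy closing-rate maximization can in fact fail to capture. Take a constant-speed Dubins pursuer at distance equal to its minimum turning radius $R$ from a stationary target, heading perpendicular to the line of sight. It has $\dot d=0$, and turning toward the target at the maximal rate gives $\ddot d=0$ (any smaller turn rate gives $\ddot d>0$). It therefore orbits at distance $R$ forever, and never collides if $R>2r$.

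You must therefore choose one of two routes. You can read $\pi_P^j$ as the time-optimal pursuit policy; then your intercept argument proves the claim, given the uniform reachability estimate that, as you note, plain controllability does not supply. Otherwise you must analyze the closed loop of $\pi_P^j$ itself.
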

% \renewcommand\qedsymbol{$\blacksquare$}
% \begin{proof}
% Consider a reference moving with velocity $v_{i}$. In this reference frame the evader (agent $i$), who is assumed to move ballistically, is stationary at point $x'_{i}$, and under the assumption that $v_{max,j}>v_{i}$ controllability implies $x'_{i}$ is reachable by the pursuer (agent $j$).
% \end{proof}
\begin{proof}
    Under zero control, agent $i$ follows the drift dynamics $\dot{\state}^i = f^i(\state^i)$, which, given the bound $v_{\max}^i$, traces a trajectory in $\Xset^i$ with a speed of at most $v_{\max}^i$. 
    Consider the relative displacement $\state^r(t) = \state^j(t) - \Phi(\state^i, \mathbf{0}, t)$. 
    In these coordinates, agent $j$ must drive $\|\state^r\|$ below $2r$. 
    Since $v_{\max}^j > v_{\max}^i$, agent $j$ has a strictly positive surplus speed $\Delta v = v_{\max}^j - v_{\max}^i > 0$. 
    By controllability, there exists a finite reorientation time $t_r \geq 0$ after which agent $j$ can direct its surplus velocity toward agent $i$, reducing $\|\state^r(t)\|$ at a rate of at least $\Delta v$ for all $t > t_r$. 
    During $[0, t_r]$, the separation grows by at most $(v_{\max}^i + v_{\max}^j)\, t_r$. 
    Therefore,
    \begin{equation}
        \tau_{ij}^* \leq t_r +
        \frac{\|\state^r(0)\|
        + (v_{\max}^i + v_{\max}^j)\, t_r - 2r}{\Delta v}
        < \infty,
    \end{equation}
    which establishes the existence of a finite aTTC.
\end{proof}

\begin{rem}
The proposition guarantees that the aTTC is well-defined and finite whenever a pursuer is faster than a nonreactive agent. Conceptually, aTTC quantifies the earliest time an agent would collide with the nonreactive agent under worst-case dynamics, not the agent's actual intent. 
% This distinction is critical and the framework provides robust safety guarantees for the evader, ensuring collision avoidance even under the most extreme physically feasible scenarios. 
As such, aTTC serves as a robust metric of adversarial risk.
% , forming the foundation for predictive control strategies that \textcolor{red}{minimizes collision risk} under all admissible uncertainties.
\end{rem}

\subsection{Surrogate Modeling}\label{sec:ml}
Direct computation of the aTTC requires integrating the system dynamics forward from the current state to predicted collision, a computationally demanding task in all but the simplest models. Furthermore, integration into an optimization based control law requires a differentiable representation of the mapping from the system state to the aTTC. In order to build a fast and differentiable surrogate model we use a neural network. 

For any two agents, the aTTC depends primarily on their relative position $\Delta x \equiv x_i-x_j$, and their velocities as $v_i$ $v_j$. Secondarily to the relative states, the aTTC also depends on three parameters: the collision radius $r$, the maximum velocity: $v_{max}$, and the maximum acceleration $a_{max}$. For clarity of exposition we will assume constant $r$ and $ a_{max}$. We then seek a parametric model
\begin{equation}\label{eq:ml_model}
    G_{\theta}[\Delta x,v_i,v_j,v_{max}]:\Xset \times \R\times \R\times \R \to \R
\end{equation}
with parameters $\theta$ which models the forward integration of (\ref{eq:agent-dynamics}) implicit in the definition (\ref{eq:time-to-collision}).  As our surrogate model we utilize a simple 3-layer fully connected neural network (NN) (layer sizes: 128-128-64) which maps $\Delta x,v_i,v_j, v_{max} \rightarrow \tau^*$. As the aTTC is a deterministic function of the system state at any instance in time no recurrent architecture is needed. As previously mentioned, the aTTC depends critically on not just the state but also on the assumed system parameters, most notable the maximum speed of the pursuer. Therefore if the surrogate model is to be applicable to a range of possible pursuer types, its predictions should be \textit{conditioned} on those parameters. In this case, we incorporate the parametric dependence on $v_{max}$ through Feature-wise Linear Modulation (FiLM) \cite{perez_film_2017} which adds a small parallel fully-connected branch that takes as input the system parameters and modulates the activation function of the main branch. Specifically, for a layer with weights and biases $W$ and $b$, and activation function $\operatorname{a}(\cdot)$ the standard output $y_{out} = \operatorname{a}(Wx +b)$ is replaced by 
\begin{align*}
    z&=\gamma \odot (Wx +b) +\beta \\
    y_{out} &= \operatorname{a}(z)
\end{align*}
where $\gamma$ and $\beta$ are learned by a parallel branch of the NN and $\odot$ is the element-wise Hadamard product. 
% An illustration of the architecture is shown in figure \ref{fig:ml}.
The NN is trained using a weighted Huber loss
\begin{equation*}
    \mathcal{L}\left(\hat{\tau}^*,\tau^*\right)=\frac{1}{\tau^*}\mathcal{L}_H\left(\hat{\tau}^*,\tau^*\right)
\end{equation*}
where the $1/\tau^*$ weight emphasizes low aTTC (high risk) configurations. We utilize this simple weight as opposed to probability-based weighted losses \cite{rudy_output-weighted_2021}, as these overemphasize the rare, but unimportant, large values of $\tau^*$. The Huber loss itself is defined as
\begin{equation*}
\mathcal{L}_H\left(\hat{\tau}^*,\tau^*\right) =
    \begin{cases}
\frac{1}{2}(\tau^* - \hat{\tau}^*)^2, & \text{if } |\tau^* - \hat{\tau}^*| \leq \delta \\
\delta \left( |\tau^* - \hat{\tau}^*| - \frac{1}{2}\delta \right), & \text{otherwise}.
\end{cases}
\end{equation*}
The Huber loss \cite{huber_robust_1964} applies a mean-squared-error (MSE) loss to small outputs and a linear loss to larger values. This ensures smooth behavior near zero without overly weighting errors for large values of $\tau^*$.

%% Control Strategy
\subsection{Control Strategy}\label{sec:cbf}
As a control policy, we adopt the following CBF-QP-based control law:
\begin{subequations}\label{eq:cbf_qp_controller}
\begin{align}
    \control^*(x) &= \argmin_{\control \in \Uset}\;\frac{1}{2}\|\control - \control_0(\state)\|_{M}, \label{subeq:objective} \\
    \;\;& \textrm{s.t.} \nonumber \\
    \mathcal{A}B(\state, &\control) \leq -\alpha B(\state) + \beta, \label{subeq:scbf_constraint}
\end{align}
\end{subequations}
% \begin{equation}
%     \control^*(x) = \argmin_{\control \in \Uset}\;\frac{1}{2}\|\control - \control_0(\state)\|_{M} \; \text{s.t.} \; \mathcal{A}B_\theta(\state, \control) \leq -\alpha B_\theta(\state) + \beta
% \end{equation}
where \eqref{subeq:objective} seeks to minimize the deviation from the nominal controller $u_0: \Xset \to \mathcal{U}$ under a weighting matrix $M \in \R^{m \times m}$, and \eqref{subeq:scbf_constraint} enforces the stochastic CBF constraint. We define $B$ in terms of the learned surrogate model for the aTTC (\ref{eq:ml_model})
\begin{equation}\label{eq:barrier}
    B(x)=\exp\left(\tau_c - G_{\theta}[\Delta x,\Delta v,v_{max}]\right),
\end{equation}
where $\tau_{c}$ is a user defined safety threshold. Note that as a NN, $G_{\theta}$ is fully differentiable so the generator in (\ref{subeq:scbf_constraint}) may be computed at each time step. 
% \textcolor{red}{We note that (\ref{subeq:scbf_constraint}) guarantees forward invariance of $\Sset$ \textit{only} when assuming the dynamics (\ref{eq:agent-dynamics}) evolve deterministically. In the stochastic setting, the process noise $w_i$ may violate this condition regardless of the applied control $u_i$. Additionally, consistent with Lemma \ref{lem:existence} adversarial pursuit conditions may lead to configurations where collision is unavoidable.} 
We note that NNs have also been used to learn forward invariance conditions directly from data \cite{dawson_safe_2021,dawson_safe_2022,yu_sequential_2023,so_how_2023} an idea that has recently been extended to multi-agent systems through graph NNs \cite{zhang2025gcbf+}. Additionally, CBFs have been incorporated directly as differentiable layers into the architecture of NNs \cite{xiao_barriernet_2023}. These have the advantage of being compatible with any NN based controller and are trainable by existing gradient descent algorithms. 
%We note that our approach is 

\section{Simulation \& Results}\label{sec:results}
% \begin{table}[]
%     \centering
%     \begin{tabular}{c|c|c|c}
%        $<10$s  & $10-30$s & $30-60$s & $>60$ s   \\
%        \hline
%         1.13s & 2.47s & 8.33s & 22.8s
%     \end{tabular}
%     \caption{Mean absolute error of NN surrogate model applied to unseen data for different ranges of true $\tau^*$}
%     \label{tab:ml_details}
% \end{table}

\subsection{Comparative Study: Planar Collision Avoidance}

Our first case study examines the proposed aTTC-CBF-based controller on a 2D (planar) double integrator over 1000 stochastic trials. 
An aTTC-CBF-controlled agent perturbed by Brownian motion process noise travels along the x-axis in the positive direction, while a second, constant-velocity agent travels along the y-axis in the negative direction, creating a potential collision near the origin.
The first agent's diffusion term is taken to be $\sigma(\state) = \mathrm{diag}(1, 0, 0, 0)$, and it starts at $[x, y, v_x, v_y] = [-5, 0, 1, 0]$, whereas the second agent begins at $[0, 5, 0, -1]$. Under constant-velocity motion and an agent radius of $r=1$, this configuration leads to a collision at $t \approx 3.6$s.
The first agent's control is restricted to acceleration along the x-axis (i.e., no y-direction input), and the second agent applies zero control.
As described in Section \ref{sec:ttc-cbf}, we train a NN to generate the aTTC value at runtime.
For comparison, we also simulate the first agent under a collision cone CBF \cite{tayal2026collision}, a velocity obstacle CBF \cite{roncero2025multi}, and a relaxed future-focused CBF \cite{black2023future}. 
Each uses the control law \eqref{eq:cbf_qp_controller} with $B(\state) = \exp(-kh(\state))$, where $h$ is the form of (zeroing) CBF of the compared work, $k > 0$ is chosen so that $B(\state_0) = 0.1$, and parameters $\alpha = 1$, $\beta = 0.018$, yielding a collision risk bound of $\rho \leq 0.25$ for all controllers.

The results of the study are summarized in Table~\ref{tab:2d-results}. The aTTC-CBF controller achieves a safety rate of 0.918, well above the 0.75 guaranteed by Theorem~\ref{thm:scbf-safety}, while also attaining the highest goal rate (the frequency of successfully reaching the origin) at 63\%. It also records the second-longest average time to first collision, trailing only the VO-CBF method, which is so conservative that it crosses the origin in just 4\% of trials. 

Figure~\ref{fig:cbf-2d} shows the mean values of $B(\state)$ for each controller, with shaded regions indicating the $+1\sigma$ range. 
The aTTC-CBF maintains consistently low values of $B$ (safe), while the other methods exhibit spikes corresponding to dangerous configurations both early and late in the simulation.
Figure~\ref{fig:x-2d} shows the mean $\pm 1\sigma$ x-trajectories of the first agent under each controller.
The aTTC-CBF makes steady, collision-free progress toward its goal, whereas the remaining methods either are overly conservative (C3BF, VO-CBF) or collide at rates exceeding the theoretical bound (RFF-CBF), indicating that the S-CBF condition cannot be satisfied at all times.

In what follows, we examine the effect of aTTC-CBF control in a multi-agent, pursuit-evasion scenario.

\begin{figure}
    \centering
    \includegraphics[width=\columnwidth]{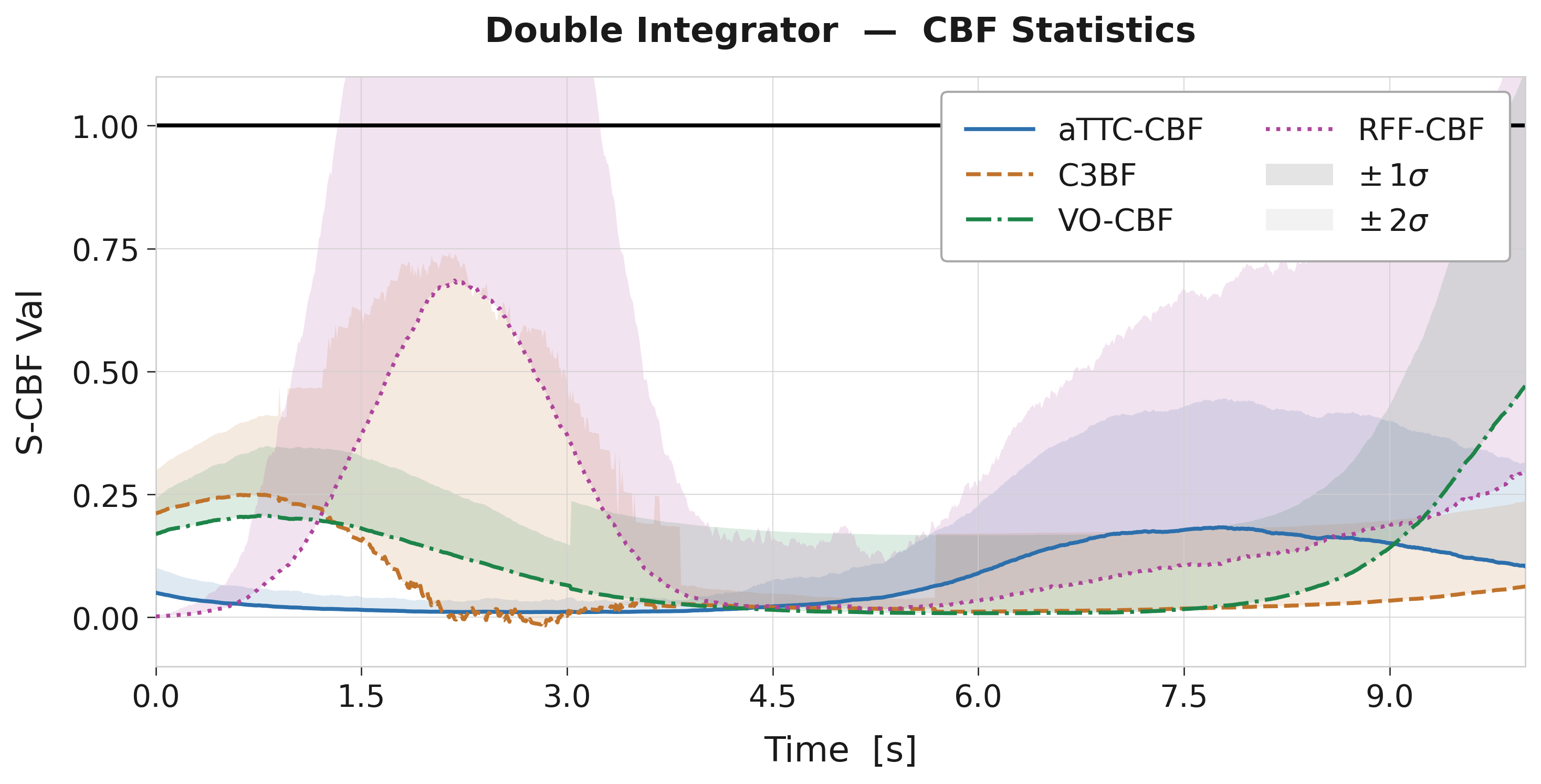}
    \caption{Mean and +1$\sigma$ S-CBF values for the four compared controllers in the planar collision avoidance example over 1000 simulated trials.}
    \label{fig:cbf-2d}
\end{figure}
\begin{figure}
    \centering
    \includegraphics[width=\columnwidth]{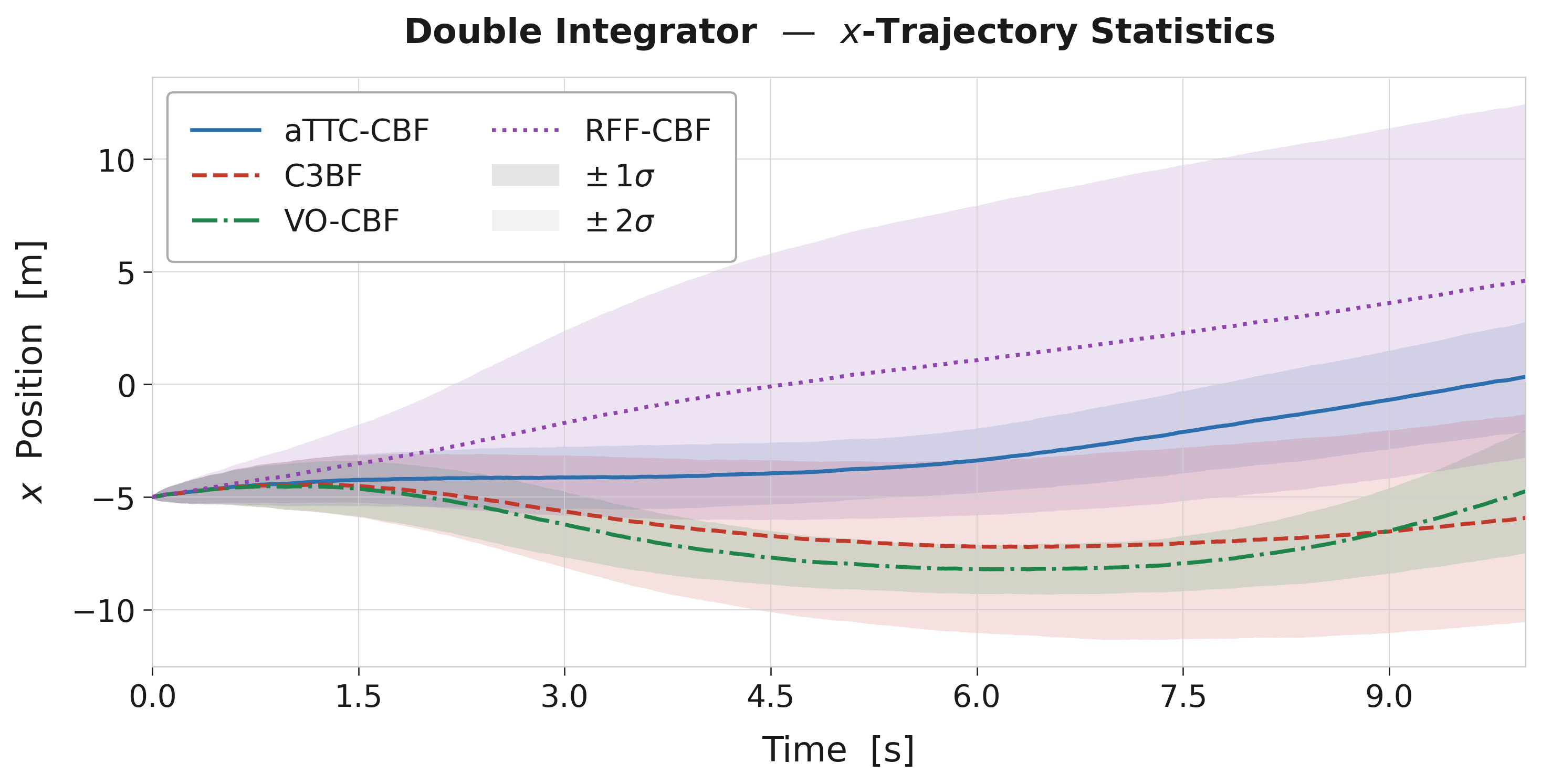}
    \caption{Mean and +/-1$\sigma$ $x$-coordinate trajectories for the four compared controllers in the planar collision avoidance example over 1000 simulated trials.}
    \label{fig:x-2d}
\end{figure}
\begin{table}[]
    \centering
    \scalebox{0.98}{%
    \begin{tabular}{c|c|c|c|c}
       & aTTC-CBF & C3BF & VO-CBF & RFF-CBF \\
       \hline
        Safety Rate & 0.918 & 0.976 & 0.783 & 0.624 \\
        \hline
        Goal Rate & 0.630 & 0.060 & 0.043 & 0.590 \\
        \hline
        Avg. 1$^{\mathrm{st}}$ Collision (s) & 7.80 & 1.49 & 9.38 & 3.62 
    \end{tabular}}
    \caption{Safety, goal progress, and collision timing statistics for the planar study.}
    \label{tab:2d-results}
    \vspace{-5mm}
\end{table}

\subsection{Comparative Study: 3D Dubin's Dynamics}
\subsubsection{Neural Network Training}
To generate the training data we construct a series of scenarios where two agents governed by the 3D Dubin's model described in~\cite{mclain2014implementing} navigate 3D space, e.g., follow randomly spaced way-points. The various scenarios are chosen to expose the NN to a wide array of relative positions and velocities. 
For these trajectories we then compute the aTTC for a range of $v_{max}$ each of which leads to an independent set of paired training data: $[\Delta x,v_i,v_j,v_{max}]\rightarrow \tau^*(v_{max})$.
The total training data consists of 50,000 seconds of simulation at time steps of 0.1 for which we compute the aTTC at three distinct pursuer-to-evader max speed ratios: 1.2, 2.4, and 3.6. The first 75\% of the data is used for training and the remaining 25\% is used to evaluate the trained model. Figure \ref{fig:nn_error} shows the conditional mean absolute error (MAE) and mean percentage error (MPE) -- that is to say the error when evaluated on samples with $\tau^* <T$ as a function of $T$. We note that tor high-risk configurations ($\tau^*<10$s) the model achieves an MAE of well under 1s and that the NN generalizes well, having comparable errors on both the training and test data sets.

\begin{figure}
    \centering
    \includegraphics[width=0.49\textwidth]{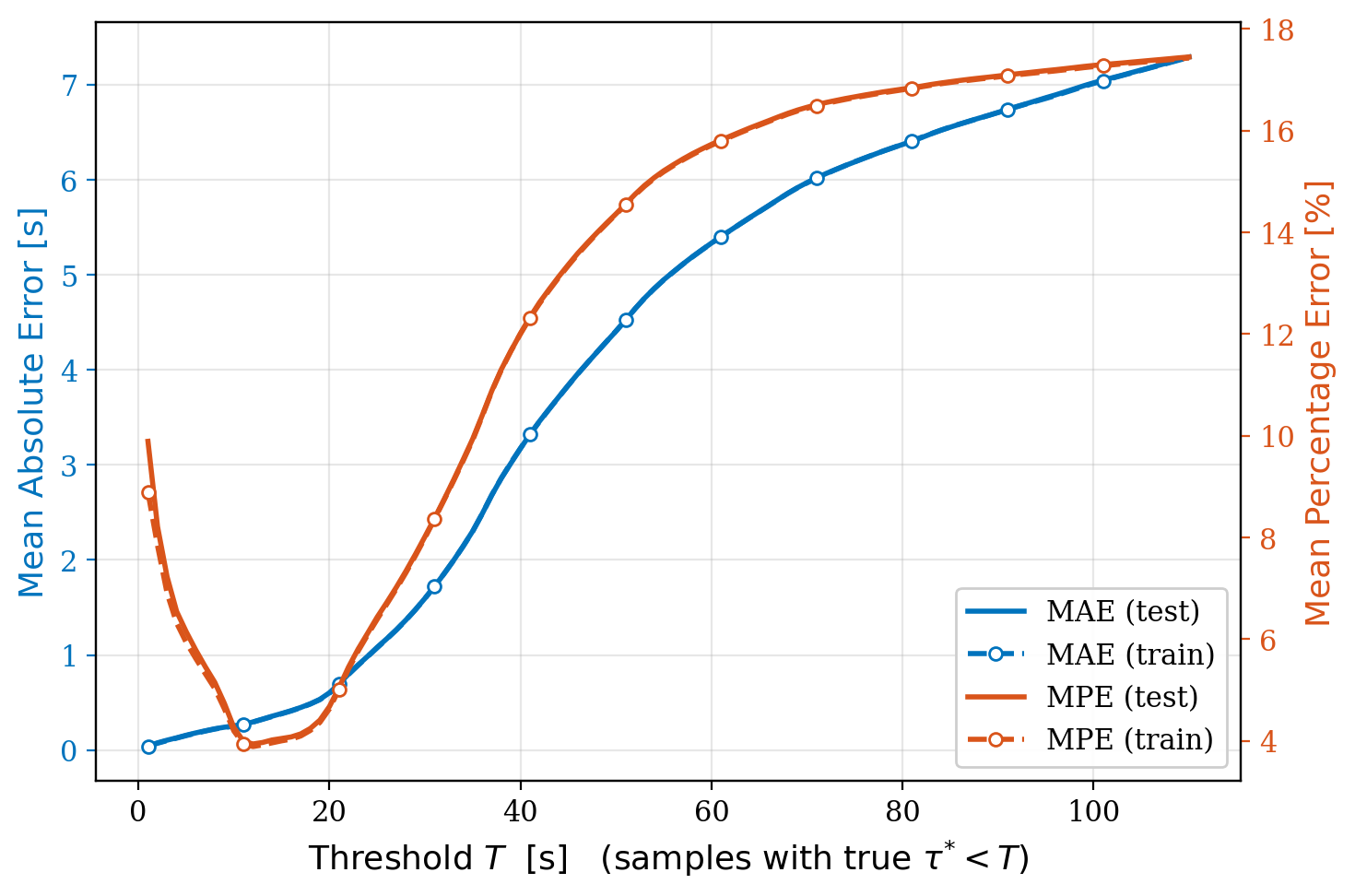}
    \caption{Conditional MAE and MPE for Dubins aTTC NN surrogate model. Training data in dashed lines w/circles, test data in solid lines.}
    \label{fig:nn_error}
\end{figure}
%  Additionally, we found that on average  the NN-model \textit{underestimates} the aTTC meaning that the surrogate model prediction is conservative -- a generally more acceptable form of bias for collision avoidance.
\subsubsection{Numerical Experiment}
As a numerical experiment we consider a stochastically forced 3D Dubin's model \cite{mclain2014implementing} where each agent's state $ \boldsymbol{x}= [x,y,z,\gamma,\psi,v]^T$ corresponds to the spatial positions $x,y,z$ as well as the pitch, and yaw angles $\gamma, \psi$, and speed $v$. The forcing (process noise) is assumed to be a Weiner process with fixed variance. 
All agents are equipped with proportional heading and speed control. 
% Inspired by the aerospace application relevant to this study, our simulation is conducted in nominal units of kilometers and seconds. 
There is an evader agent, whose nominal control objective is to orbit a fixed way-point at a desired speed of $0.25$km/s while avoiding collisions with five pursuer agents using the aTTC-based CBF introduced in Section~\ref{sec:cbf}. 
The pursuers attempt to intercept the evader through a proportional control scheme that navigates towards a far off point collinear with the pursuer and agent, which ensure the pursuer attempts to intercept the evader at maximum speed, which is set to $v_{max,p}=0.75$km/s (50\% higher than that of the evader $v_{max,e} = 0.5$km/s). In order to obtain a statistical view of the collision avoidance skill we run the simulation for $T_{sim} = 50,000$s.
% A summary of the additional simulation details are summarized in table \ref{tab:sim_details} and CBF the parameters and numerical overhead of each method is provided in table \ref{tab:cbf_details}. 

In order to illustrate the benefits of the aTTC-CBF we compare our method to a higher-order CBF (HOCBF)~\cite{xiao2021high}. To facilitate a fair comparison the CBF parameters were made as similar as possible. Specifically, the aTTC-CBF relies on $\alpha$ and $\beta$ in (\ref{subeq:objective}) as well as the critical time $\tau_{c}$ in (\ref{eq:barrier}). The HOCBF relies on three constants $\alpha_1,\alpha_2$,  $\beta$ and a critical distance $r_{min}$. We fix $\beta = 0.01, ~\ \alpha=\alpha_1=0.1, ~\  \alpha_2 = 1, ~\ \tau_{c}=5$s, and $r_{cbf} = 1$km. We note that for this example, both the aTTC-CBF and the HOCBF have a per-call overhead of less than 1.0 ms in Matlab on a CPU and that 
due to the challenging setup of this problem, the CBF constraint will almost always be violated and the evader is generally trying to \textit{get to safety} rather than \textit{stay safe}. We also emphasize that this comparison is \textit{not} intended to demonstrate the superiority of our approach to any \textit{specific} state-of-the-art method but rather to illustrate phenomenologically the advantages of TTC-based CBFs. 

The headline results are summarized in Table \ref{tab:dubins_results}, the aTTC-CBF succumbs to 3.7 collisions per 100s vs. 16.2 per 100s for the HOCBF -- more than 75\% less. The aTTC-CBF is able to achieve this by directly encoding the \textit{dynamics constrained} closing speed between it and its pursuers -- which allows it to more effectively modulate it's own velocity. As a proxy for the types of evasive maneuvers demanded by each CBF we show in Figure \ref{fig:col_scatter} a scatter plot of the horizontal and vertical velocity components $v\cos({\gamma})$ and $v\sin(\gamma)$ \textit{at} the time of collision. The aTTC-CBF equipped agents exhibit significantly more velocity modulation -- with collision speeds being relatively evenly distributed between 0 and the maximum. The HOCBF on the other hand uses very little vertical motion to escape the pursuers, instead relying purely on horizontal maneuvers. This discrepancy is directly a consequence of the dynamics-aware nature of aTTC. In this numerical experiment, the agents have a maximum yaw rate, $\dot{\psi}_{max}$ (horizontal turns) four-times greater than the maximum pitch rate, $\dot{\gamma}_{max}$ (vertical turns). The HOCBF is purely geometric and, knowing nothing of these limits, is biased towards acting in the direction with the highest control authority -- in this case yaw. The aTTC-CBF on the other hand has implicitly learned these control limits and knows that the pursuers are subject to the same $\dot{\gamma}_{max} < \dot{\psi}_{max}$ constraint which allows it to more effectively apply the appropriate evasive strategy.

This improved velocity modulation allows the aTTC-CBF to navigate between the pursuers more effectively. This is illustrated in Figure \ref{fig:col_dist} which shows the probability density function (PDF) of minimum distance between the evader and the nearest pursuer, defined as $\operatorname{min}_j(|x_1-x_j|)$. The aTTC-CBF spends less time in the collision zone but also spends more time at intermediate distances (0.25-1.25km), indicating navigation in-and-among the pursuers. 
This maneuverability means that the aTTC-CBF equipped agents more effectively escape close encounter -- defined here as a pursuer coming within twice the collision distance of the evader. For the HOCBF agents 69\% of close encounters result in collisions, where as the aTTC-CBF agents escape 50\% of these -- see Table \ref{tab:dubins_results}.

Finally, to evaluate the scalability of our framework we repeated the above experiment for increasing number of pursuing agents -- holding all other parameters constant. We found that as expected the computational cost and the total collision rate scale linearly with the number of agents, though tabulated data was omitted due to space constraints. All the code needed to reproduce these results as well as videos of the simulation can be found at \url{https://github.com/ben-barthel/Temporal_Barrier_Functions}

% In summary, the aTTC-CBF's advantage is that it understands closing velocity. When a faster pursuer is closing in, the aTTC-CBF recognizes that high relative speeds are dangerous even at moderate distances, and responds by decelerating, which buys time for heading corrections. 

% \begin{table}[]
%     \centering
%     \begin{tabular}{c|c|c|c|}
%        $N_{evader}$ & $N_{pursuer}$ & $T_{sim}$ & $v_{max,p}/v_{max,e}$   \\
%        \hline
%          1 & 5& 50,000 & 1.5 
%     \end{tabular}
%     \caption{Pursuit simulation details: Evader and pursuer count, simulation length, and pursuer speed advantage.  }
%     \label{tab:sim_details}
% \end{table}
% \begin{table}[]
%     \centering
%     \begin{tabular}{c|c|c|c|c|c}
%         $\alpha$ & $\tau_{min}$ & $k$ & $\alpha_1,\alpha_2$ & $r_{min}$  & $k$   \\
%        \hline
%         0.1 & 5.0 s & 0.5 & 0.1,1.0 & 1.0 km & 0.5
%     \end{tabular}
%     \caption{CBF parameters for aTTC-CBF: $(\alpha, \tau_{min},k_{attc})$ and HOCBF: $(\alpha_1,\alpha_2,r_{min}, k_{hocbf}$ )}.
%     \label{tab:cbf_details}
% \end{table}

\begin{table}[]
    \centering
    \begin{tabular}{c|c|c}
         & aTTC-CBF & HOCBF \\
       \hline
       Collision Rate (cols/100s)  & 3.7 & 16.2 \\
       \hline
       Collision / Close Encounter  & 0.50 & 0.69 
    \end{tabular}
    \caption{Headline results for multi-agent pursuit evasion. Total collision rate and fraction of close encounters: $d<4r$ which resulted in a collision: $d<2r$ per evader-pursuer pair.}
    \label{tab:dubins_results}
\end{table}

% \begin{figure}
%     \centering
%     \includegraphics[width=0.4\textwidth]{figures_cdc_final/collision_stats.png}
%     \caption{Total collision rate (a) and fraction of close encounters: $d<4r$ which resulted in a collision: $d<2r$ per evader-pursuer pair. aTTC-CBF (blue), HOCBF (red)}
%     \label{fig:col_rate}
% \end{figure}
\begin{figure}
    \centering
    \includegraphics[width=0.45\textwidth]{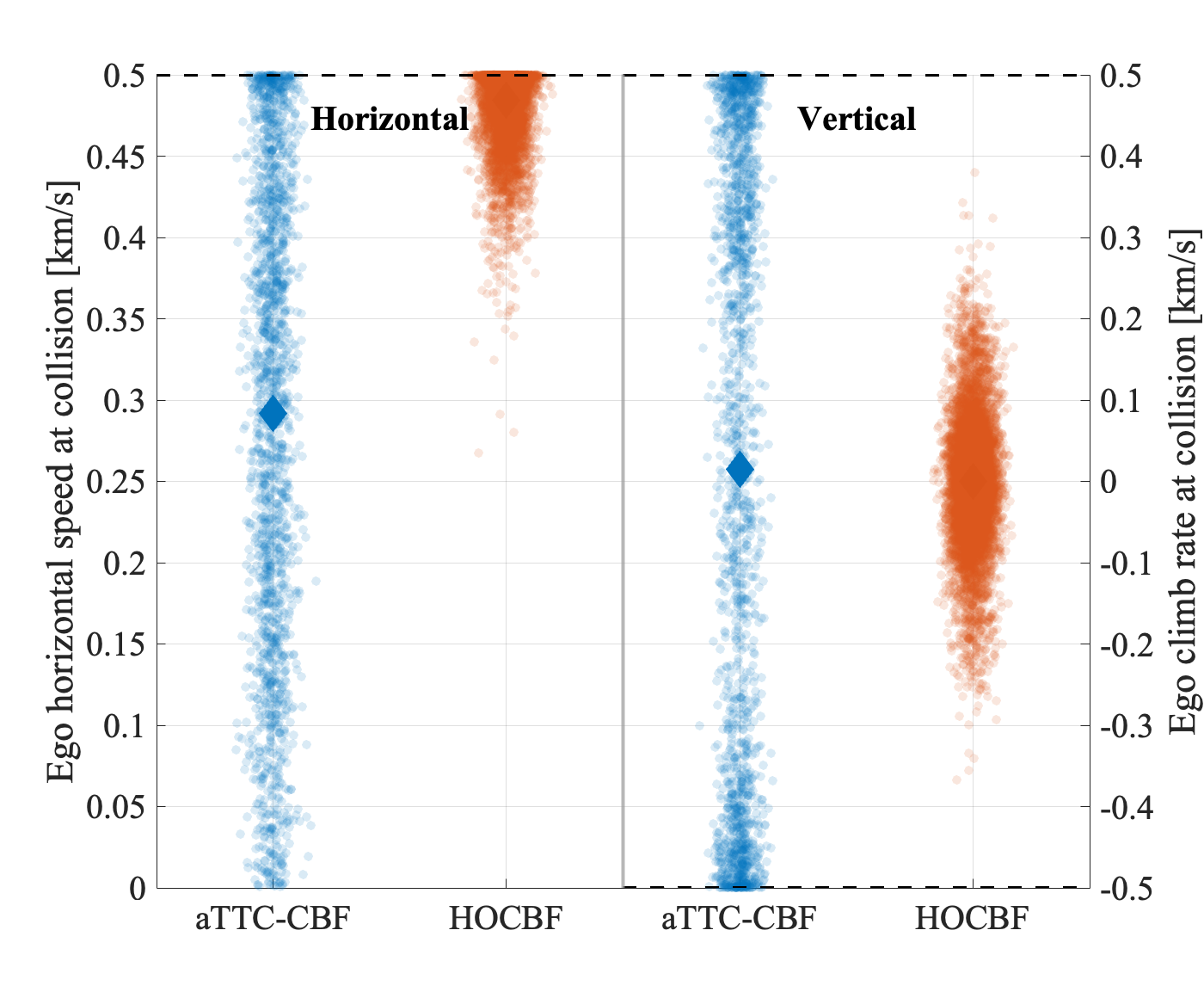}
    \caption{Scatter plot showing distribution of horizontal and vertical velocity at time of collision with mean values indicated with triangles -- horizontal spacing is purely to improve readability.}
    \label{fig:col_scatter}
\end{figure}

\begin{figure}
    \centering
    \includegraphics[width=0.45\textwidth]{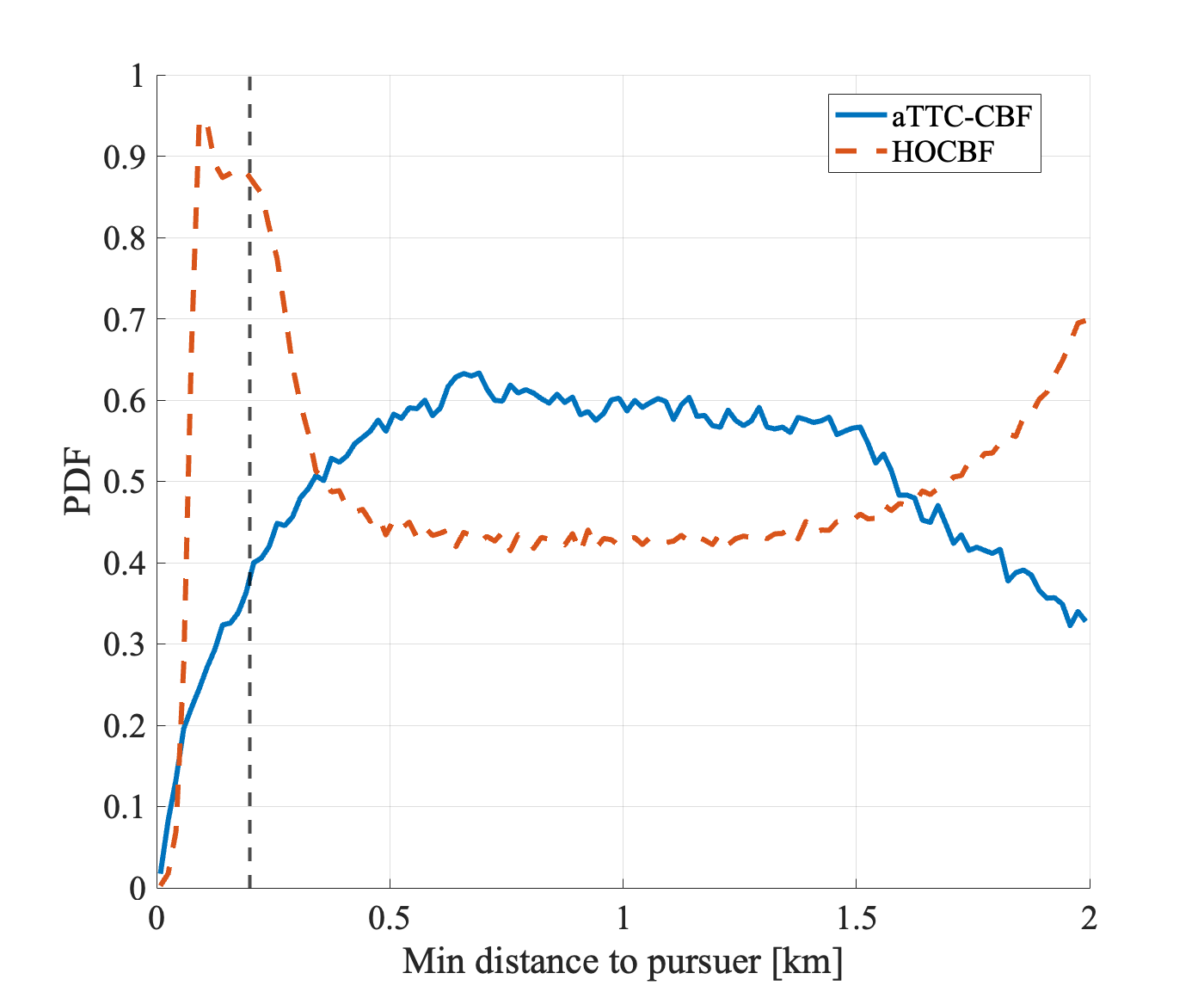}
    \caption{Probability density function (PDF) of minimum distance: $\operatorname{min}_j(|x_1-x_j|)$ from evader to nearest pursuer. aTTC-CBF (blue), HOCBF (red).}
    \label{fig:col_dist}
\end{figure}

\section{Conclusions}\label{sec:conclusion}
In this work, we introduced time-to-collision (TTC) as a basis for proactive CBF-based control in stochastic multi-agent systems. 
We established a sufficient condition for probabilistic collision avoidance and proposed an adversarial TTC variant to provide robustness in the multi-agent setting.
We demonstrated the success of our proposed controller on two case studies: a 2D double integrator benchmarked against existing methods, and a multi-agent pursuit-evasion scenario involving fixed-wing aircraft, both of which highlighted the advantages of using TTC in a CBF-based control law.

\section*{Acknowledgments}
This material is based upon work supported by the Department of the Air Force under Air Force Contract No. FA8702-15-D-0001 or FA8702-25-D-B002. Any opinions, findings, conclusions or recommendations expressed in this material are those of the author(s) and do not necessarily reflect the views of the Department of the Air Force. 
% © 2026 Massachusetts Institute of Technology.
% Delivered to the U.S. Government with Unlimited Rights, as defined in DFARS Part 252.227-7013 or 7014 (Feb 2014). Notwithstanding any copyright notice, U.S. Government rights in this work are defined by DFARS 252.227-7013 or DFARS 252.227-7014 as detailed above. Use of this work other than as specifically authorized by the U.S. Government may violate any copyrights that exist in this work.
%\input{sections-cdc-final/7_supplementary}

\bibliographystyle{IEEEtran}
\begingroup
\renewcommand{\url}[1]{}
\bibliography{refs/references, refs/references_2}
\endgroup

\end{document}